\newcommand{\qw}[1][-1]{\ar @{-} [0,#1]}
\newcommand{\qwx}[1][-1]{\ar @{-} [#1,0]}
\newcommand{\gate}[1]{*+<.6em>{#1} \POS ="i","i"+UR;"i"+UL **\dir{-};"i"+DL **\dir{-};"i"+DR **\dir{-};"i"+UR **\dir{-},"i" \qw}
\newcommand{\control}{*!<0em,.025em>-=-<.2em>{\bullet}}
\newcommand{\controlo}{*+<.01em>{\xy -<.095em>*\xycircle<.19em>{} \endxy}}
\newcommand{\ctrl}[1]{\control \qwx[#1] \qw}
\newcommand{\ctrlo}[1]{\controlo \qwx[#1] \qw}
\newcommand{\targ}{*+<.02em,.02em>{\xy ="i","i"-<.39em,0em>;"i"+<.39em,0em> **\dir{-}, "i"-<0em,.39em>;"i"+<0em,.39em> **\dir{-},"i"*\xycircle<.4em>{} \endxy} \qw}
\newcommand{\multigate}[2]{*+<1em,.9em>{\hphantom{#2}} \POS [0,0]="i",[0,0].[#1,0]="e",!C *{#2},"e"+UR;"e"+UL **\dir{-};"e"+DL **\dir{-};"e"+DR **\dir{-};"e"+UR **\dir{-},"i" \qw}
\newcommand{\ghost}[1]{*+<1em,.9em>{\hphantom{#1}} \qw}
\newcommand{\gategroup}[6]{\POS"#1,#2"."#3,#2"."#1,#4"."#3,#4"!C*+<#5>\frm{#6}}
\newcommand{\Qcircuit}{\xymatrix @*=<0em>}
\begin{document}

\markboth{M. Arabzadeh et al.}{Quantum-Logic Synthesis of Hermitian Gates}

\title{Quantum-Logic Synthesis of Hermitian Gates}
\author{MONA ARABZADEH
\affil{Amirkabir University of Technology}
MAHBOOBEH HOUSHMAND
\affil{Amirkabir University of Technology}
MEHDI SEDIGHI
\affil{Amirkabir University of Technology}
MORTEZA SAHEB~ZAMANI
\affil{Amirkabir University of Technology}}

\begin{abstract}
In this paper, the problem of synthesizing a general Hermitian quantum gate into a set of primary quantum gates is addressed.
To this end, an extended version of the Jacobi approach for calculating the eigenvalues of Hermitian matrices in linear algebra is considered as the basis of the
proposed synthesis method. The quantum circuit synthesis method derived from the Jacobi approach and its optimization challenges are described. It is shown that the proposed method results in multiple-control rotation gates around the $y$ axis, multiple-control phase shift gates, multiple-control NOT gates and a middle diagonal Hermitian matrix, which can be synthesized to multiple-control Pauli \emph{Z} gates. 
Using the proposed approach, it is shown how multiple-control $U$ gates, where $U$ is a single-qubit Hermitian quantum gate, can be implemented using a linear number of elementary gates in terms of circuit lines with the aid of one auxiliary qubit in an arbitrary state.
\end{abstract}

%
%


%
%


\keywords{Quantum computation, Synthesis, Hermitian gates}


%

\maketitle

\section{Introduction}\label{sec:level1}
Quantum computers use quantum mechanical phenomena, such as superposition
and entanglement  which make them advantageous over the classical ones, e.g., they can solve
certain problems such as integer factorization~\cite{shor-1997-26} and database search~\cite{Grover} much more quickly than any classical computer using the best currently known algorithms.
Besides that, the idea of simulating quantum-mechanical effects by computers \cite{Feynman82} can be further considered as a motivation for working on quantum computation problems.

Quantum-logic synthesis is referred to as the problem of decomposing a given arbitrary quantum function to a set of quantum gates, i.e., elementary operations
which can be implemented in quantum technologies.
This problem has been widely considered for a general unitary matrix and a number of solutions based on matrix decomposition have been proposed.
QR decomposition in \cite{George2001,vartiainen-2004} and the cosine-sine (CS) decomposition in CSD \cite{Bergholm:2004}, QSD \cite{Shende06} and their combination, BQD \cite{saeedi2010block}, have been applied for quantum-logic synthesis. On the other hand, a set of methods for more specific unitary matrices such as two-qubit operators \cite{vidal-2004-69,Shende-1-2004,vatan-2004-69}, three-qubit operators~\cite{vatan-2004}, general diagonal matrices \cite{bullock-2004-4} and diagonal Hermitian quantum gates~\cite{diagonal} were proposed in the literature. It is shown in~\cite{diagonal} that diagonal Hermitian quantum gates can be decomposed to a set that solely consists of multiple-controlled \emph{Z} gates.

Since quantum gates are linear unitary transformations, they are invertible. The inverse of a unitary matrix $U$ is  $U^{-1}=U^\dag$, which is its conjugate transpose. A restricted set of unitary operations are those which are self-inverse, i.e., $U=U^{-1}=U^\dag$. These sets are called Hermitian quantum gates.
Many quantum gates such as CNOT, SWAP, Toffoli, Fredkin, Hadamard, and Pauli gates, which are used frequently in quantum circuits, are Hermitian~\cite{pathak13}.
It is notable that if a gate is Hermitian, then its controlled gate with any number of control lines is also Hermitian.
Hermitian gates also appear in well-known quantum circuits,
such as encoding and decoding circuits of stabilizer codes~\cite{Grassal,stab} where these circuits solely consist of Hadamard, Pauli and controlled-Pauli gates. 
Besides, Hermitian matrices, according to their properties, are used as a specific set of matrices in quantum algorithms. In \cite{harrow2009quantum}, the solution to a set of equations is obtained by a quantum algorithm assuming that the matrix of coefficient is a sparse Hermitian matrix.
Subsequently, another research team~\cite{wiebe2012quantum} applied the same algorithm to determine the quality of a least-squares fit over an exponentially large data set.

In this paper, the decomposition problem of Hermitian unitary matrices using an extended version of the Jacobi idea~\cite{golub2012matrix} is addressed.
The Jacobi method in linear algebra is an iterative method to find the eigenvalues and eigenvectors of symmetric and Hermitian matrices.
Using this, a given Hermitian quantum gate is synthesized to a set of multiple-control gates and a diagonal Hermitian matrix by the proposed approach.
The decomposition of these high-level gates to CNOT and single-qubit gates applicable in quantum technologies has been shown previously \cite{Barenco95,bullock-2004-4,Shende09}. While synthesis approaches devised for general unitary matrices would work on Hermitians as well, for the reasons explained in the paper, we believe exploiting the special features of Hermitians would produce superior results. Section 3.5 of the paper shows that the proposed approach can lead to better results than one of the general well-known quantum synthesis approaches, QSD \cite{Shende06}, for synthesizing $CU$ gates and the general synthesis approach for $C^kU$  gates, \cite{Barenco95} where $U$ is a single-qubit Hermitian gate. If the quantum
circuit that needs to be synthesized comprises Hermitian matrices, the proposed approach can be applied to gain better results. Otherwise, the general methods
can still be applied.

To describe our proposed method for quantum-logic synthesis of Hermitian matrices, the remainder of the paper is organized as follows. In Section \ref{sec:BasicCo}, some basic concepts about quantum computation and the Jacobi method are explained. The proposed synthesis approach is introduced in Section \ref{sec:method}. Section \ref{sec:Conc} concludes the paper.

\section{Preliminaries and Definitions}\label{sec:BasicCo}
\subsection{Quantum Basic Concepts}
A quantum bit, named as \emph{qubit}, is a quantum state with two basis states $|0\rangle$ and $|1\rangle$. Based on the superposition principle, a qubit can take any linear combination of its two basis states, i.e., $|\psi\rangle = \alpha|0\rangle+\beta|1\rangle$.
In this equation, $\alpha$ and $\beta$ are complex numbers such that $|\alpha|^2 + |\beta|^2 = 1$.

If the qubit is measured in the computational, i.e., $\{$$\left\vert 0\right\rangle$, $\left\vert 1\right\rangle$$\}$ basis, the classic outcome of 0 is observed with the probability of $|\alpha|^2$ and the classic outcome of 1 is observed with the probability of $|\beta|^2$. If 0 is observed, the state of the qubit after the measurement collapses to $|0\rangle$. Otherwise, it collapses to $|1\rangle$.

A matrix $U$ is \emph{unitary} if $UU^\dag=I$, in which $U^\dag$ is the conjugate transpose of $U$ and $I$ is the identity matrix.
An $n$-qubit quantum gate corresponds to a $2^n\times2^n$ unitary matrix which performs a particular operation on $n$ qubits. Various quantum gates with different functionalities have been introduced. Quantum circuits constructed from a set of quantum gates are often synthesized using either a ``basic gate" library \cite{Barenco95}, with CNOT and single-qubit gates, or an  ``elementary gate" library \cite{bullock-2004-4}, with CNOT and single-qubit rotation gates. Single-qubit rotation gates around $y$ and $z$ axes with the angle of $\theta$ have the matrix representations as illustrated in (\ref{gate1}).
\begin{equation}\label{gate1}
R_y (\theta ) = \left[ {\begin{array}{*{20}c}
{\cos {\textstyle{\theta  \over 2}}} & {\sin {\textstyle{\theta  \over 2}}}  \\
{ - \sin {\textstyle{\theta  \over 2}}} & {\cos {\textstyle{\theta  \over 2}}}  \\
\end{array}} \right]\,\,\,\
R_z (\theta) = \left[ {\begin{array}{*{20}c}
{e{\textstyle{{ - i\theta } \over 2}}} & 0  \\
0 & {e{\textstyle{{i\theta } \over 2}}}  \\
\end{array}} \right]
\end{equation}
\normalsize

There is another set of useful single-qubit gates called Pauli gates (\ref{gate2}). This set together with the identity matrix span the full vector space of two-dimensional matrices.

\begin{equation}\label{gate2}
\sigma _x = X = \left[ \begin{array}{*{20}c}
   0 & 1\\
   1 & 0\\
 \end{array} \right]{\kern 1pt} \,\,\,\sigma _y = Y  =\left[ {\begin{array}{*{20}c}
   0 & { - i}  \\
   i & 0  \\
\end{array}} \right]\,\,\,\sigma _z = Z = \left[ \begin{array}{*{20}c}
   1 & 0  \\
   0 & { - 1}  \\
 \end{array} \right]
\end{equation}

Phase shift gates are a set of single-qubit gates which leave the $\left\vert 0\right\rangle$ state unchanged and map $\left\vert 1\right\rangle$ to $e^{i\alpha}$$\left\vert 1\right\rangle$ as shown in (\ref{gate3}).
Some common phase shift gates are phase gate ($S$), $\frac{\pi}{8}$($T$) and $\sigma _z$ where $\alpha=\frac{\pi}{2},\ \frac{\pi}{4}$ and $\pi$, respectively.
\begin{equation}\label{gate3}
R(\alpha)   = \left[ \begin{array}{*{20}c}
   1 & 0  \\
   0 & {e^{i\alpha } }  \\
 \end{array} \right]
\end{equation}

$C^kU$ gates applying on $n$ qubits for $1\leq k\leq n-1$ are a set of quantum gates with one target and $k$ control qubits. These control qubits can be either positive or negative. If the initial states of the positive control(s) and the negative control(s) are $\left\vert 1\right\rangle$ and $\left\vert 0\right\rangle$ respectively, then $U$ gate is applied to the target qubit and no action is taken otherwise.
CNOT and CZ are two examples of $CU$ gates with a positive control.
They operate on two qubits, i.e., control and target qubits and if the control qubit is $\left\vert 1\right\rangle$, then $\sigma_x$ and $\sigma_z$ gates are performed on the target qubit, respectively, and otherwise the state of the qubit is left unchanged.

In this paper, a $C^{n-1}U$ gate which operates on $n$ qubits is called a multiple-control $U$ gate. Multiple-control $U$ gates can be decomposed to CNOT and single-qubit gates as shown in \cite{Barenco95}.

A two-level unitary matrix~\cite{Nielsen10} is named after a set of unitary matrices which operate non-trivially only on two vector components. Two-level matrices differ with the identity matrix in four elements placed in the indices $pp$, $pq$, $qp$ and $qq$. Multiple-control $U$ gates are some examples of two-level matrices.

If the elements outside the main diagonal of a square matrix are all zero, the matrix is called diagonal. The eigenvalues of a diagonal matrix are its diagonal elements. The matrix representation of CZ gate is an example of a diagonal matrix.

A square matrix can be partitioned into smaller square matrices with the same size, called blocks. If blocks outside the main diagonal are zero matrices, a matrix is called block-diagonal.
A quantum multiplexer~\cite{Shende06} over $n$ qubits, has $m$ target qubits and $s=n-m$ select qubits. A different quantum gate is applied on the targets according to the values of select qubits.
In the case that the select qubits are the most significant ones, quantum multiplexers have a block-diagonal matrix representation with $U_i(2^m)$ matrices, $0 \leq i \leq 2^{n-m}-1$, on main diagonal blocks.
In this case, each $U_i(2^m)$ gate is applied to the target qubit(s) when the select qubit(s) are in the state $\left\vert i\right\rangle$.
If $U_i$'s are $R_z$ (respectively $R_y$), the quantum multiplexer is called multiplexed $R_z$ (respectively $R_y$) gate.
A select qubit in a quantum multiplexer is denoted by $\Box$ as used in~\cite{Shende06}.
If a quantum multiplexer has a single select bit which is the most significant one, it can be written as $U_0\oplus U_1$ where $U_0$ and $U_1$ are applied on the target qubits when the select qubit is $|0\rangle$ and $|1\rangle$ respectively.
\subsection{Hermitian Matrix Properties}
A matrix $A$ is called Hermitian~\cite{Nielsen10} or self-adjoint if $A^\dag=A$. Every Hermitian matrix is normal and therefore, it is diagonalizable as shown in~(\ref{eq0}).
\begin{equation}\label{eq0}
\centering
A=U D U^\dag
\end{equation}
The elements of $D$ are the eigenvalues of $A$ and the columns of the unitary matrix $U$ are the eigenvectors of $A$.
All eigenvalues of a Hermitian matrix are real numbers. On the other hand, the eigenvalues of a unitary matrix have modulus equal to 1 and therefore, the eigenvalues of a Hermitian unitary matrix, the elements of the matrix $D$, are either +1 or -1.
Pauli matrices are some examples of Hermitian matrices.

It should be noted that if a circuit solely consists of Hermitian quantum gates, its matrix is not necessarily Hermitian, since Hermitian matrices are closed under tensor product but they are not closed under matrix multiplication.

In this paper, symmetric, Hermitian and diagonal Hermitian quantum gates which operate on $n$ qubits are denoted by $\mathbb{S}(2^n)$, $\mathbb{H}(2^n)$ and $\mathbb{D}_n$, respectively.

\subsection{Jacobi Method}\label{sec:Jacob}
The Jacobi method includes a set of algorithmic solutions for the real symmetric eigenvalue problem \cite{golub2012matrix}. The main idea of these algorithms is to reduce the norm of the off-diagonal elements shown in (\ref{eq1}), in a systematic manner.

\begin{equation}\label{eq1}
\centering
{\rm off}(A) = \sqrt {\sum\limits_{p = 1}^n {\sum\limits_{\scriptstyle q = 1 \hfill \atop
  \scriptstyle q \ne p \hfill}^n {a_{pq}^2 } } }
\end{equation}

This is done by the help of Jacobi rotations, or Givens rotations denoted by $G$, to make each off-diagonal element zero in each iteration of the Jacobi method. For an $n$$\times$$n$ matrix, $G_{pq}$ is a two-level matrix, i.e., it is similar to an identity matrix $I_n$ except for four elements $pp$, $pq$, $qp$ and $qq$ where $p$ and $q$ are the indices that specify the row and column of the element which is targeted to become zero ($p$$<$$q$). These elements are illustrated in~(\ref{eq2}) as $G(\theta)$.

In the $j^{th}$ Jacobi iteration, the given matrix $A^{(j-1)}$ with a non-zero off-diagonal element at $pq$ is converted to the new matrix $A^{(j)}= G_{pq}^T A^{(j-1)}G_{pq}$ where $G_{pq}^T$ is the transpose of $G_{pq}$. The angle $\theta$ is chosen as shown in (\ref{eq2}) in order to set the $pq$ element to zero.

\begin{equation}\label{eq2}
\centering
G(\theta)  = \left[ {\begin{array}{*{20}c}
   {\cos {\theta \over 2 }} & { \sin {\theta \over 2} }  \\
   { - \sin {\theta \over 2} } & {\cos {\theta \over 2} }  \\
\end{array}} \right]
,\,\,\,\,\,\,\theta  = \arctan (\frac{{ - 2a_{pq} }}{{a_{pp}  - a_{qq} }})
\end{equation}

The final result is a diagonal matrix, and the diagonal elements are the eigenvalues of the $A^{(0)}$ matrix in the first iteration.

\subsection{Jacobi Method for Hermitian Matrices}\label{sec:JacobHerm}
To extend the Jacobi method for Hermitian matrices, Jacobi rotations should be replaced by their complex counterparts. A typical complex rotation, $Q(\theta,\alpha)$ is defined as (\ref{eq4}) which can be used in a similar manner as $G({\theta})$ to construct a Jacobi rotation, where $\alpha$ and $\theta$ can be computed using (\ref{eq7}).

\begin{equation}\label{eq4}
\centering
Q(\theta,\alpha)  = \left[ {\begin{array}{*{20}c}
   {\cos {\theta \over 2} } & {e^{i\alpha } \sin {\theta \over 2} }  \\
   { -e^{-i\alpha } \sin {\theta \over 2} } & {\cos {\theta \over 2} }  \\
\end{array}} \right]
\end{equation}

\begin{equation}\label{eq7}
\tan (\theta ) = \frac{{ - 2\left| {a_{pq} } \right|}}{{a_{pp}  - a_{qq} }},
\,\,\,\,\,\,
e^{i\alpha }  = \frac{{a_{pq} }}{{\left| {a_{pq} } \right|}}
\end{equation}

There is another complex rotation matrix introduced in \cite{park1993real} which is used in our proposed method. This complex rotation matrix, $Q'(\theta,\alpha)$, is defined as (\ref{eq5}), where $G(\theta)$ is a real Jacobi rotation described earlier, and $R(\alpha)$ is the phase shift matrix as introduced in Section~\ref{sec:BasicCo}.

\begin{equation}\label{eq5}
\centering
Q'(\theta,\alpha) = R(-\alpha)G(\theta)  = \left[ {\begin{array}{*{20}c}
   {\cos {\theta \over 2} } & { \sin {\theta \over 2} }  \\
   { -e^{-i\alpha } \sin {\theta \over 2} } & {e^{-i\alpha } \cos {\theta \over 2} }  \\
\end{array}} \right]
\end{equation}

The use of the complex rotation matrix is shown in (\ref{eq6}), where $p$ and $q$ are the indices of the element in the matrix $A$ which should become zeroed $(p<q)$. This definition is used in our synthesis method in Section \ref{sec:method}.

\begin{equation}\label{eq6}
\centering
\begin{array}{l}
 A^{(j)}  = {Q'}_{pq}^{\dag} A^{(j - 1)} {Q'}_{pq}  \\\\
  = \left[ {\begin{array}{*{20}c}
   {c } & { - s }  \\
   {s} & {c}  \\
\end{array}} \right]\left[ {\begin{array}{*{20}c}
   1 & 0  \\
   0 & {e^{i\alpha } }  \\
\end{array}} \right]\left[ {\begin{array}{*{20}c}
   {a_{pp}^{(j-1)} } & {a_{pq}^{(j-1)} }  \\
   {a_{qp}^{(j-1)} } & {a_{qq}^{(j-1)} }  \\
\end{array}} \right]\left[ {\begin{array}{*{20}c}
   1 & 0  \\
   0 & {e^{ - i\alpha } }  \\
\end{array}} \right]\left[ {\begin{array}{*{20}c}
   {c} & {s }  \\
   { - s} & {c}  \\
\end{array}} \right], \\\\
 c = {\cos {\theta \over 2} }, \,\,
 s = {\sin {\theta \over 2} }
 \end{array}
\end{equation}

The parameters $\theta$ and $\alpha$ can be calculated using (\ref{eq7}) where $-\frac{\pi}{2} \leq \theta \leq \frac{\pi}{2}$ \cite{park1993real}.
In the remainder of this paper, whenever the indices $p$ and $q$ are not important, $G_{pq}$ and $R_{pq}$ are simply referred to as $G$ and $R$.
\section{Jacobi-Based Quantum-Logic Synthesis}\label{sec:method}
In this section, the main structure of the proposed Jacobi-based method for Hermitian quantum gate synthesis (JBHS) is discussed.
 The Jacobi method described in Sections \ref{sec:Jacob} and \ref{sec:JacobHerm} is used in our synthesis algorithm as a matrix decomposition method. It transforms a given Hermitian unitary matrix $\mathbb{H}(2^n)$ to a set of two-level matrices, their conjugate transposes and a single diagonal matrix as shown in (\ref{eq8}) by a recursive function. The elements of $\mathbb{H}^{(j)}$ are denoted by $h^{(j)}_{pq}$ for the $j^{th}$ iteration.
\begin{equation}\label{eq8}
\begin{array}{l}
 \left\{ \begin{array}{l}
 \mathbb{H}(2^n)^{(0)}  = \mathbb{H}(2^n) \\
 \mathbb{H}(2^n)^{(j)}  = G_{pq}^{\dag} R_{pq}^{\dag} \mathbb{H}^{(j - 1)}(2^n) R_{pq} G_{pq} \,{\kern 1pt} \,\, \\
 \end{array} \right. \\\\
 {\kern 1pt} 1 \le j \le 2^{2n-1}  - 2^{n-1},\,\,\,\,
 0 \le p<q \le 2^n-1 ,\,\,\,\,h^{(j-1)}_{pq} \ne 0\\
 \end{array}
\end{equation}
Solving the recursive function of (\ref{eq8}) leads to (\ref{eq8-1}) where $j_{\max }$ shows the total number of iterations which is equal to the number of non-zero elements in $\mathbb{H}^{(j)}$ matrix whose row index is less than column index.

\begin{equation}\label{eq8-1}
\begin{array}{l}
 \mathbb{D}_n = G^{\dag (j)} R^{\dag (j)} ...G^{\dag (1)} R^{\dag (1)} \mathbb{H}(2^n)R^{(1)} G^{(1)} ...R^{(j)} G^{(j)}  \\
 \mathbb{D}_n =\prod\limits_{j = j_{\max } }^1 {\{ G^{\dag (j)} R^{\dag (j)} \} } \mathbb{H}(2^n)\prod\limits_{j = 1}^{j_{\max } } {\{ R^{(j)} G^{(j)} \} }  \\
 \end{array}
\end{equation}

Based on (\ref{eq8-1}), the general structure of the proposed JBHS is shown in (\ref{eq8-2}).
\begin{equation}\label{eq8-2}
\begin{array}{l}
 \mathbb{H}(2^n) = R^{(1)} G^{(1)} ...R^{(j)} G^{(j)} \mathbb{D}_{n}G^{\dag (j)} R^{\dag (j)} ...G^{\dag (1)} R^{\dag (1)}  \\
 \mathbb{H}(2^n) = \prod\limits_{j = 1}^{j_{\max } } {\{ R^{(j)} G^{(j)} \} } \mathbb{D}_{n}\prod\limits_{j = j_{\max } }^1 {\{ G^{\dag (j)} R^{\dag (j)} \} }  \\
 \end{array}
\end{equation}

In the remainder of this section, the quantum equivalents of the produced two-level matrices and the middle diagonal matrix are discussed independently and then the whole structure of the proposed synthesis method in (\ref{eq8-2}) is presented by synthesizing a general $\mathbb{H}(4)$ matrix as an example.
\subsection{Quantum-Equivalence of the Two-Level Matrices}
In this section, first, possible multiple-control $U$ gates on $n$ qubits and their corresponding matrices are introduced. Then, in Theorem \ref{theo1}, two-level matrices produced by the proposed synthesis method are described as quantum operators according to the definition of multiple-control gates.

An $n$-qubit multiple-control $U$ gate is denoted as $C^{n-1}U_i^j$, where $0\leq i\leq n-1$ and $0\leq j\leq 2^{n-1}-1$. The target qubit of $C^{n-1}U_i^j$ is the $i^{th}$ qubit and the ($n$-1)-bit binary expression of $j$ represents the control string, $0$ for negative and $1$ for positive control(s).
$C^{n-1}U_i^j$ gates are two-level matrices which change only two basis states, i.e., $|a\rangle$ and $|b\rangle$ where the binary expressions of $a$ and $b$ differ only in one bit, i.e., they are two adjacent gray codes. $a$ and $b$ can be computed from $i$ and $j$ by an injective function as follows.
The binary expressions of $a$ (respectively $b$) are obtained by inserting a zero (respectively one) in the $i^{th}$ bit of the binary expression of $j$.
The matrix of $C^{n-1}U_i^j$ is the same as an $I^{ \otimes n}$ matrix except for the four elements of $U$ which are placed in the indices $aa$, $ab$, $ba$ and $bb$.

It should be noted that the number of different $C^{n-1}U_i^j$ gates is equal to $n2^{n-1}$. Since there are $\binom{n}{1}=n$ possible target qubits for these gates and for each target qubit, $2^{n-1}$ different control strings can be assumed. As an example, for $n=2$, Figure \ref{fig1} shows four different $C U_i^j$ gates corresponding to the matrices in (\ref{eq9}).

\begin{equation}\label{eq9}
\begin{array}{l}
\begin{array}{*{20}c}
   {\,\,\,\,\,\,\,\,\,\,\,\,\,00} & {01} & {10} & {11}  \\
\end{array}\begin{array}{*{20}c}
   {\,\,\,\,\,\,\,\,\,\,\,\,\,\,\,\,\,\,\,\,\,\,\,\,00} & {01} & {10} & {11}  \\
\end{array} \\
 \begin{array}{*{20}c}
   {00}  \\
   {01}  \\
   {10}  \\
   {11}  \\
\end{array}\left[ {\begin{array}{*{20}c}
   {u_{00} } & 0 & {u_{01} } & 0  \\
   0 & 1 & 0 & 0  \\
   {u_{10} } & 0 & {u_{11} } & 0  \\
   0 & 0 & 0 & 1  \\
\end{array}} \right]{\kern 1pt} {\kern 1pt} \,\,\,\,\,\,\,\begin{array}{*{20}c}
   {00}  \\
   {01}  \\
   {10}  \\
   {11}  \\
\end{array}\left[ {\begin{array}{*{20}c}
   1 & 0 & 0 & 0  \\
   0 & {u_{00} } & 0 & {u_{01} }  \\
   0 & 0 & 1 & 0  \\
   0 & {u_{10} } & 0 & {u_{11} }  \\
\end{array}} \right] \\\\
\,\,\,\,\,\,\,\,\,\,\,\,\,\,i = 0\,,j = 0{\kern 1pt} \,\,\,\,\,\,\,\,\,\,\,\,\,\,\,\,\,\,\,\,\,\,\,\,\,\,\,\,i = 0\,,j = 1 \\\\
 \begin{array}{*{20}c}
   {\,\,\,\,\,\,\,\,\,\,\,\,\,00} & {01} & {10} & {11}  \\
\end{array}\begin{array}{*{20}c}
   {\,\,\,\,\,\,\,\,\,\,\,\,\,\,\,\,\,\,\,\,\,\,\,\,00} & {01} & {10} & {11}  \\
\end{array} \\
 \begin{array}{*{20}c}
   {00}  \\
   {01}  \\
   {10}  \\
   {11}  \\
\end{array}\left[ {\begin{array}{*{20}c}
   {u_{00} } & {u_{01} } & 0 & 0  \\
   {u_{10} } & {u_{11} } & 0 & 0  \\
   0 & 0 & 1 & 0  \\
   0 & 0 & 0 & 1  \\
\end{array}} \right]{\kern 1pt} {\kern 1pt} \,\,\,\,\,\,\,\begin{array}{*{20}c}
   {00}  \\
   {01}  \\
   {10}  \\
   {11}  \\
\end{array}\left[ {\begin{array}{*{20}c}
   1 & 0 & 0 & 0  \\
   0 & 1 & 0 & 0  \\
   0 & 0 & {u_{00} } & {u_{01} }  \\
   0 & 0 & {u_{10} } & {u_{11} }  \\
\end{array}} \right] \\\\
\,\,\,\,\,\,\,\,\,\,\,\,\,\,i = 1\,,j = 0{\kern 1pt} \,\,\,\,\,\,\,\,\,\,\,\,\,\,\,\,\,\,\,\,\,\,\,\,\,\,\,\,i = 1\,,j = 1 \\\\
 \end{array}
\end{equation}

\begin{figure}
\centering
\small
\[
\Qcircuit @C=0.7em @R=0.5em @!R{
			&\gate{U}\qwx[1]	&\gate{U}\qwx[1]	&\ctrlo{1}			&\ctrl{1}&\qw		&\\
			&\ctrlo{0}			&\ctrl{0}		     &\gate{U}			&\gate{U}	&\qw		&
}
\]
\caption{\label{fig1}Corresponding gates of the matrices in (\ref{eq9}). From left to right: ($i=0$ and $j=0$), ($i=0$ and $j=1$), ($i=1$ and $j=0$) and ($i=1$ and $j=1$).}
\end{figure}
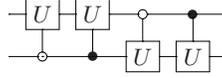
\normalsize
\begin{theorem}\label{theo1}
Each two-level matrix produced in the Jacobi iteration can be decomposed into an $n$-qubit multiple-control $R_y(\theta)$, an $n$-qubit multiple-control $R(\alpha)$ gate and a set of multiple-control NOT gates which act on $n$-qubits.
\end{theorem}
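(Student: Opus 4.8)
The plan is to reduce the statement to a single combinatorial fact about moving computational basis states with multiple-control NOT gates. By~(\ref{eq5}), every two-level matrix arising in a Jacobi step has the form $Q'_{pq}=R(-\alpha)_{pq}\,G(\theta)_{pq}$, where $G(\theta)_{pq}$ is the $2^n\times 2^n$ matrix equal to the identity except for the block in~(\ref{eq2}) occupying rows and columns $p,q$ --- which by~(\ref{gate1}) is exactly a two-level $R_y(\theta)$ --- and $R(-\alpha)_{pq}$ is the two-level phase-shift $\mathrm{diag}(1,e^{-i\alpha})$ in those same coordinates. So it suffices to write each of these two ``two-level'' gates as a single $n$-qubit multiple-control $R_y$ (resp.\ multiple-control $R(\alpha)$) gate conjugated by multiple-control NOT gates; since both act on the same index pair $(p,q)$ they can share the conjugation, and the product then telescopes --- the inner NOT layers cancel --- into exactly the shape asserted by the theorem.

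Next I would prove the combinatorial core: for any two distinct $n$-bit strings $p,q$, there is a permutation $P$ of the $2^n$ basis states, equal to a product of $n$-qubit multiple-control NOT gates, such that $P\ket{q}=\ket{q}$ and $P\ket{p}=\ket{g}$ for some $g$ differing from $q$ in exactly one bit. List the positions $S=\{s_1,\dots,s_m\}$ in which $p$ and $q$ disagree and form the path $p=g_0,g_1,\dots,g_m=q$ obtained by flipping $s_1,\dots,s_m$ one at a time. Since the restriction of $g_k$ to $S$ is the unique string agreeing with $q$ on $s_1,\dots,s_k$ and with $p$ on $s_{k+1},\dots,s_m$, the $g_k$ are pairwise distinct and consecutive ones differ in a single bit. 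For $0\le k\le m-2$ let $C_k$ be the multiple-control NOT that flips bit $s_{k+1}$ conditioned (with positive or negative controls dictated by $g_k$) on the remaining $n-1$ bits; as a permutation $C_k$ is the transposition $(g_k\;g_{k+1})$. Then $P:=C_{m-2}\cdots C_1C_0$ carries $\ket{p}=\ket{g_0}$ step by step to $\ket{g_{m-1}}$ and leaves $\ket{q}=\ket{g_m}$ untouched, because $g_m$ occurs in none of the transpositions $C_0,\dots,C_{m-2}$; and $g_{m-1}$ differs from $q$ in the single bit $s_m$.

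With $P$ in hand, the conjugation step would complete the argument. By the description of multiple-control gates given just before the theorem, $C^{n-1}U_i^{\,j}$ acts non-trivially precisely on the two basis states differing in bit $i$ whose remaining bits spell $j$; choosing $i=s_m$ and $j$ the common value of the other bits of $g_{m-1}$ and $q$, the conjugate $P^\dag\,(C^{n-1}U_i^{\,j})\,P$ is the two-level matrix on $\ket{p},\ket{q}$ carrying the $2\times2$ block of $U$, possibly with its two rows and columns transposed according to whether $P$ sends $\ket{p}$ to the ``$0$-in-bit-$i$'' or the ``$1$-in-bit-$i$'' state of the target pair. Taking $U=R_y(\theta)$ reproduces $G(\theta)_{pq}$ (a transposed block is merely $R_y(-\theta)$) and taking $U=R(\alpha)$ reproduces $R(-\alpha)_{pq}$ (a transposed block is absorbed by one extra NOT on qubit $i$, itself a product of multiple-control NOT gates). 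Substituting back, $Q'_{pq}=R(-\alpha)_{pq}\,G(\theta)_{pq}=P^\dag\,C^{n-1}R(-\alpha)_i^{\,j}\,C^{n-1}R_y(\theta)_i^{\,j}\,P$, a decomposition into one multiple-control $R_y$ gate, one multiple-control $R(\alpha)$ gate, and the multiple-control NOT gates comprising $P$ and $P^\dag$, as required.

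I expect the work to be bookkeeping rather than anything conceptually deep, and the two places that need care are: confirming that the path $g_0,\dots,g_m$ really consists of distinct vertices --- this is exactly what lets the transpositions $C_k$ leave $\ket{q}$ fixed while chaining $\ket{p}$ to a neighbor of $\ket{q}$ --- and tracking, during conjugation, whether $P$ sends $\ket{p}$ onto the ``$0$'' or the ``$1$'' member of the target pair, since that determines whether the recovered $2\times2$ block is the intended one or its row/column transpose. For $R_y$ this only flips the sign of $\theta$, and for the phase gate it is fixed by conjugating qubit $i$ with a NOT; neither alters the set of gate types used, but both must be spelled out to make the decomposition exact. One may also record in passing that the construction uses at most $2(|S|-1)\le 2(n-1)$ multiple-control NOT gates, which feeds into the gate-count discussion later in the section.
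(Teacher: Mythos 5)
Your proposal is correct and follows essentially the same route as the paper's proof: identify the case where $p$ and $q$ are adjacent gray codes as directly realizable by $C^{n-1}R_y(\theta)_i^j$ and $C^{n-1}R(\alpha)_i^j$, and otherwise conjugate by a gray-code chain of multiple-control NOT gates that carries $\ket{p}$ to a neighbour $\ket{g}$ of $\ket{q}$ while fixing $\ket{q}$; your treatment is merely more explicit about the distinctness of the path vertices and the row/column orientation of the recovered $2\times 2$ block. One small point in your favour: your count of $l-1$ multiple-control NOT gates on each side (for $p,q$ differing in $l$ bits) is the correct one, whereas the paper's proof states $l-2$, which already fails for $l=2$ where one NOT gate is genuinely needed.
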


 \begin{proof}If the binary expressions of $p$ and $q$ in the $pq$ index of $G_{pq}$ and $R_{pq}$ in (\ref{eq8}) are two adjacent gray codes, there will be an $n$-qubit multiple-control $R_y(\theta)$, i.e., $C^{n-1}R_y(\theta)_i^j$, equivalent to $G_{pq}$, and an $n$-qubit multiple-control $R(\alpha)$, i.e., $C^{n-1}R(\alpha)_i^j$, equivalent to $R_{pq}$, where $i$ and $j$ can be easily obtained from $p$ and $q$, as explained earlier.

Now consider the case that the difference between the binary expressions of $p$ and $q$ is more than one bit, i.e., $l$ bits where $2\leq l \leq n$.
There exists a gray code sequence that connects the binary expressions of $p$ with $q$ using $l+1$ elements (containing the source and destination numbers). $l-2$ $n$-qubit multiple-control NOT gates are required to map the binary expression of $p$ to an adjacent gray code of $q$, denoted by $g$. Then an $n$-qubit multiple-control $R(\alpha)$, i.e., $C^{n-1}R(\alpha)_i^j$, and an $n$-qubit multiple-control $R_y(\theta)$, i.e., $C^{n-1}R_y(\theta)_i^j$ are applied. $i$ and $j$ can be easily computed from $p$ and $g$ as mentioned before. Finally, $l-2$ multiple-control NOT gates are required to put  $|g\rangle$ back to $|p\rangle$.
It is worth mentioning that if $h_{pq}$ in (\ref{eq8}) is a real number, no multiple-control $R(\alpha)$ gate is required.
\end{proof}
\subsection{Quantum-Equivalence of the Diagonal Matrix} \label{sec:diag}
The produced diagonal matrix $\mathbb{D}_n$ in (\ref{eq7}) can be synthesized either by previous general synthesis methods for quantum diagonal matrices such as \cite{bullock-2004-4} or by specific methods for Hermitian diagonal matrices as the one presented in~\cite{diagonal}. Figure \ref{fig2} shows the synthesis of an arbitrary diagonal matrix $\Delta_{3}$ for $n=3$ qubits. The synthesis method of~\cite{diagonal} was presented using the fact that the diagonal elements of Hermitian gates are either +1 or -1 and hence they can be synthesized using a set that solely consists of multiple-control Pauli \emph{Z} gates. The authors of~\cite{diagonal} introduced a binary representation for the diagonal Hermitian gates and showed that the binary representations of multiple-controlled \emph{Z} gates form a basis for the vector space that is produced by the binary representations of all diagonal Hermitian quantum gates. Finally, the problem of decomposing a given diagonal Hermitian gate was mapped to the problem of writing its binary
representation in the specific basis mentioned above. It was shown that this approach can lead to circuits with lower costs in comparison with the approach of~\cite{bullock-2004-4}.

\begin{figure}[t]
\centering
\small
\[
\Qcircuit @C=0.7em @R=0.5em @!R{
&\multigate{2}{\Delta_3}    &\qw &   &&&\qw		&\qw			&\gate{R_z}		        &\qw\\
&\ghost{\Delta_3}           &\qw & = &&&\qw		&\gate{R_z}		&\gate{}\qwx[-1]		&\qw		\\
&\ghost{\Delta_3}           &\qw &   &&&\gate{R_z}	&\gate{}\qwx[-1]	&\gate{}\qwx[-1]		&\qw		
}
\]
\caption{\label{fig2}Synthesis of $\Delta_3$ gate \cite{bullock-2004-4}.}
\end{figure}

\subsection{Gate-Order Analysis of the Proposed Method}

To find the number of produced gates by the proposed JBHS method after applying the pure synthesis method to synthesize a given $\mathbb{H}(2^n)$ quantum gate without any possible optimizations, four kinds of gates are considered:
$C^{n-1}R_y(\theta)$, $C^{n-1}R(\alpha)$ and $C^{n-1}$NOT gates besides a diagonal Hermitian gate which can be synthesized by the method of \cite{bullock-2004-4}, as mentioned in Section \ref{sec:diag}, by at most $2^n-2$ CNOT gates.

For a general $\mathbb{H}(2^n)$ matrix, there are at most $2^{2n-1}-2^{n-1}$ non-zero elements which should become zeroed. Therefore, in the worst-case, $2^{2n}-2^n$ gates of each $C^{n-1}R_y(\theta)$ and $C^{n-1}R(\alpha)$ types, the latter only for complex elements, are needed.
Among these $2^{2n-1}-2^{n-1}$ non-zero elements, $h_{ij}$: $0$$\leq$$ i$$<$$j$ $\leq$ $2^n-1$, there are $n2^{n-1}$ elements where the binary representations of $i$ and $j$ differ in only one bit and therefore, no $C^{n-1}$NOT gate is needed for them. The other $2^{2n-1}-n2^{n-1}-2^{n-1}$ elements need at most $4(n-2)C^{n-1}$NOT gates. As a result, the numbers of produced $C^{n-1}R_y(\theta)$ and $C^{n-1}R(\alpha)$ gates are of the order $O(4^n)$ and the number of needed $C^{n-1}$NOT gates is of the order $O(n4^n)$. It is worth mentioning that zero elements eliminate their related gates without any effects on the other non-zero elements. For sparse $\mathbb{H}(2^n)$ matrices, the actual number of needed gates is much less, since the number of gates is a coefficient of non-zero elements. Moreover, possible post-synthesis optimizations decrease the number of required elementary gates. These post-synthesis optimizations include eliminating the control lines of the two produced Hermitian conjugate multiple-control $U$ gates around a middle multiple-control $Z$ gate. An example of this optimization is shown in Fig.~\ref{fig:multiple}.
\subsection{Possible Options in the Synthesis Procedure}
There are two major options for the order of selecting $p$ and $q$ in (\ref{eq8}) during the synthesis process which are discussed in the following propositions.
Using these propositions which are based on the inherent parallelism of the Jacobi algorithm, elements can be selected in order to produce circuits with better results in terms of the number of elementary gates.

\textbf{Proposition 1.} Off-diagonal elements of a given Hermitian gate on $n$ qubits, $h_{pq}$ $p<q$, can be divided into $2^n-1$ independent sets, namely the computations of these sets have no conflicts.

\begin{proof}According to the definition of \emph{parallel-ordering} problem~\cite{golub2012matrix}, $(p_1,q_1)$$(p_2,q_2)$$,...,$$(p_N,q_N)$, $N=(2^n-1)2^{n-1}$, is a parallel ordering of a set $\{(p,q)|0\leq p<q\leq 2^n-1\}$ if for $s$ from $0$ to $2^n-2$, the rotation set $rotation.set(s)=\{(p_k,q_k):k=2^{n-1}  s+1: 2^{n-1} (s+1)\}$ consists of rotations with no conflicts which results into $2^n-1$ independent sets. Therefore, $(2^n-1)!$ different arrangements of these sets are possible during or after the JBHS synthesis process.
Any arrangement of these sets determines the arrangements of their conjugate transposes on the other side of the middle diagonal gate.
\end{proof}

\textbf{Proposition 2.} The resulted gates of each member in every set of Proposition 1 are interchangeable with the resulted gates of other members in that set.

\begin{proof}Based on Proposition 1, each set consists of non-conflicting members. Therefore, their computations have no conflicts and make their resulted gates interchangeable. It should be noted that any arrangement of these gates of each member determines the arrangements of their conjugate transposes on the other side of the middle diagonal gate.
\end{proof}

These options provide the possibility of arranging these parallel sets and the gates inside each set in the produced circuit to cancel some redundant gates.

A Hermitian quantum gate $\mathbb{H}(4)$ is considered in (\ref{eq10}). Since the matrix is Hermitian, setting the off-diagonal elements $h_{pq}$, $p<q$, to zero by (\ref{eq8-2}) turns the other off-diagonal elements into zero too.

\begin{equation}\label{eq10}
\begin{array}{l}
 \,\,\,\,\,\,\,\,\,\,\,\,\,\,\,00\,\,\,\,\,01\,\,\,\,\,10\,\,\,\,\,11 \\
 \begin{array}{*{20}c}
   {00}  \\
   {01}  \\
   {10}  \\
   {11}  \\
\end{array}\left[ {\begin{array}{*{20}c}
   {d_{00} } & {h_{01} } & {h_{02} } & {h_{03} }  \\
   {} & {d_{11} } & {h_{12} } & {h_{13} }  \\
   {} & {} & {d_{22} } & {h_{23} }  \\
   {} & {} & {} & {d_{33} }  \\
\end{array}} \right] \\
 \end{array}
\end{equation}

According to Proposition 1, there are three independent sets \{(0,2)(1,3)\}, \{(0,1)(2,3)\}, \{(0,3)(1,2)\}. The synthesized circuit according to these sets are illustrated in Figure \ref{fig3}. The order of gates is the same as the order of the sets from left to right. The independent sets which can be exchanged with each other at the left side of the middle diagonal gate are separated by dash lines.
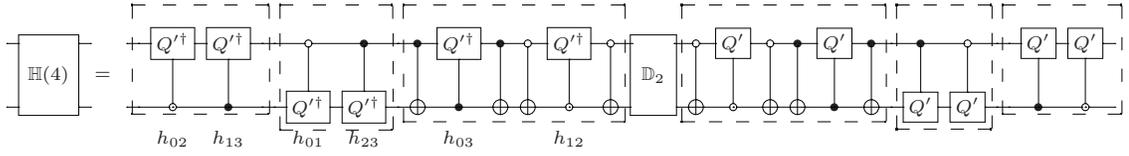
\begin{figure}[t!]
\centering
\scriptsize
\[
\Qcircuit @C=0.65em @R=0.0000001em @!R{
&		        					&    										& 	&&&    	&				&				&    	&	&				&			 &	&		&				&		&		&				&		&		         	&		&				&		&		&				&		&	&				 &				&	&    &				&				&		\\
&\multigate{2}{\mathbb{H}(4)} \gategroup{1}{7}{4}{10}{.7em}{--}  &\qw \gategroup{1}{12}{4}{13}{.7em}{--}\gategroup{1}{15}{4}{20}{.7em}{--}	&   	&&&\qw 	 &\gate{Q'^{\dag}}\qwx[2]	&\gate{Q'^{\dag}}\qwx[2]	&\qw 	&\qw	&\ctrlo{2}			&\ctrl{2}		&\qw	&\ctrl{2}	&\gate{Q'^{\dag}}\qwx[2]	 &\ctrl{2}	&\ctrlo{2}	&\gate{Q'^{\dag}}\qwx[2]	&\ctrlo{2}	&\multigate{2}{\mathbb{D}_2} 	&\ctrlo{2}	&\gate{Q{'}}\qwx[2]		&\ctrlo{2}	&\ctrl{2}	 &\gate{Q{'}}\qwx[2]		&\ctrl{2}	&\qw	&\ctrl{2}			&\ctrlo{2}			&\qw	&\qw &\gate{Q{'}}\qwx[2]	&\gate{Q{'}}\qwx[2]		&\qw		\\
&		        					&    										& = 	&&&    	&				&				&    	&	&				&			 &	&		&				&		&		&				&		&		         	&		&				&		&		&				&		&	&				 &				&	&    &				&				&		\\
&\ghost{\mathbb{H}(4)}  \gategroup{1}{22}{4}{27}{.7em}{--}	&\qw \gategroup{1}{29}{4}{30}{.7em}{--}	\gategroup{1}{32}{4}{35}{.7em}{--}	&   	&&&\qw 	 &\ctrlo{0}			&\ctrl{0}			&\qw 	&\qw	&\gate{Q'^{\dag}}		&\gate{Q'^{\dag}}	&\qw	&\targ		&\ctrl{0}			&\targ		&\targ		 &\ctrlo{0}			&\targ		&\ghost{\Delta_2}        	&\targ		&\ctrlo{0}			&\targ		&\targ		&\ctrl{0}			&\targ		&\qw	 &\gate{Q{'}}			&\gate{Q{'}}			&\qw	&\qw &\ctrl{0}			&\ctrlo{0}			&\qw		\\
&								&    										&	&&&    	&h_{02}				&h_{13}				&    	&	&h_{01}				 &h_{23}			&	&		&h_{03}				&		&		&h_{12}				&		&			 	&		&				&		&		&				 &		&	&				&				&	&    &				&				&				
}
\]
\caption{\label{fig3}Synthesis of a Hermitian quantum gate for $n$$=$$2$ qubits.}
\end{figure}
General decomposition of a symmetric gate on two qubits is illustrated in Figure \ref{fig3.1}.
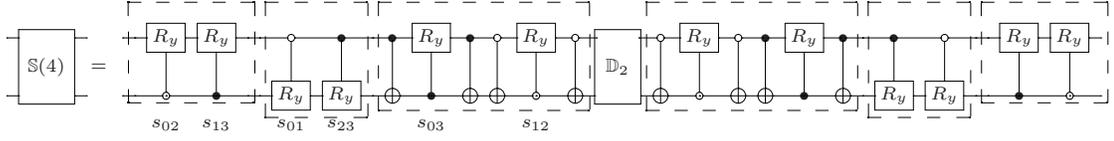
\begin{figure}[t!]
\centering
\scriptsize
\[
\Qcircuit @C=0.65em @R=0.0000001em @!R{
&		        					&    										& 	&&&    	&			&			&    	&	&			&		&	&		 &			&		&		&			&		&		         	&		&			&		&		&			&		&	&				&				 &	&    &			&			&		\\
&\multigate{2}{\mathbb{S}(4)} \gategroup{1}{7}{4}{10}{.7em}{--}  &\qw \gategroup{1}{12}{4}{13}{.7em}{--}	\gategroup{1}{15}{4}{20}{.7em}{--}	&   	&&&\qw 	 &\gate{R_y}\qwx[2]	&\gate{R_y}\qwx[2]	&\qw 	&\qw	&\ctrlo{2}		&\ctrl{2}	&\qw	&\ctrl{2}	&\gate{R_y}\qwx[2]	&\ctrl{2}	&\ctrlo{2}	 &\gate{R_y}\qwx[2]	&\ctrlo{2}	&\multigate{2}{\mathbb{D}_2} 	&\ctrlo{2}	&\gate{R_y}\qwx[2]	&\ctrlo{2}	&\ctrl{2}	&\gate{R_y}\qwx[2]	&\ctrl{2}	&\qw	 &\ctrl{2}			&\ctrlo{2}			&\qw	&\qw &\gate{R_y}\qwx[2]	&\gate{R_y}\qwx[2]	&\qw		\\
&		        					&    										& = 	&&&    	&			&			&    	&	&			&		&	&		 &			&		&		&			&		&		         	&		&			&		&		&			&		&	&				&				 &	&    &			&			&		\\
&\ghost{\mathbb{S}(4)}  \gategroup{1}{22}{4}{27}{.7em}{--}	&\qw \gategroup{1}{29}{4}{30}{.7em}{--}	\gategroup{1}{32}{4}{35}{.7em}{--}	&   	&&&\qw 	 &\ctrlo{0}		&\ctrl{0}		&\qw 	&\qw	&\gate{R_y}		&\gate{R_y}	&\qw	&\targ		&\ctrl{0}		&\targ		&\targ		&\ctrlo{0}		&\targ		 &\ghost{\Delta_2}        	&\targ		&\ctrlo{0}		&\targ		&\targ		&\ctrl{0}		&\targ		&\qw	&\gate{R_y}			&\gate{R_y}			&\qw	 &\qw &\ctrl{0}		&\ctrlo{0}		&\qw		\\
&								&    										&	&&&    	&s_{02}			&s_{13}			&    	&	&s_{01}			&s_{23}		 &	&		&s_{03}			&		&		&s_{12}			&		&			 	&		&			&		&		&			&		&	&				 &				&	&    &			&			&				
}
\]
\caption{\label{fig3.1}Synthesis of a symmetric quantum gate for $n$$=$$2$ qubits.}
\end{figure}
Using a quantum multiplexer notation from \cite{Shende06} and the optimization rules of \cite{arabzadeh2010rule} the circuit of Figure \ref{fig4} is obtained. Additional optimizations lead to the circuit of Figure \ref{fig4.1} by eliminating more CNOT gates.

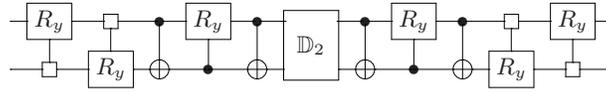
\begin{figure}[t!]
\centering
\small
\[
\Qcircuit @C=0.7em @R=0.5em @!R{
	&\gate{R_y}\qwx[1]	&\gate{}\qwx[1]	&\ctrl{1}	&\gate{R_y}\qwx[1]	&\ctrl{1}	&\multigate{1}{\mathbb{D}_2}	&\ctrl{1}	&\gate{R_y}\qwx[1]	&\ctrl{1}	 &\gate{}\qwx[1]	&\gate{R_y}\qwx[1]			&\qw		\\
	&\gate{}		&\gate{R_y}	&\targ		&\ctrl{0}		&\targ		&\ghost{\Delta_2} 		&\targ		&\ctrl{0}		&\targ		 &\gate{R_y}	&\gate{}				 &\qw		
}
\]
\caption{\label{fig4}The circuit of Figure \ref{fig3.1} after applying some optimizations from \cite{arabzadeh2010rule} on CNOT gates and merging controlled-$R_y$ gates to reach a multiplexed $R_y$ with fewer CNOT and single-qubit rotation gates.}
\end{figure}

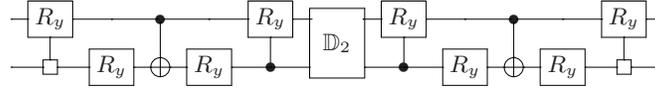
\begin{figure}[t!]
\centering
\small
\[
\Qcircuit @C=0.7em @R=0.5em @!R{
	&\gate{R_y}\qwx[1]	&\qw		&\ctrl{1}	&\qw		&\gate{R_y}\qwx[1]	&\multigate{1}{\mathbb{D}_2}	&\gate{R_y}\qwx[1]	&\qw		&\ctrl{1}	&\qw		 &\gate{R_y}\qwx[1]			&\qw		\\
	&\gate{}		&\gate{R_y}	&\targ		&\gate{R_y}	&\ctrl{0}		&\ghost{\Delta_2} 		&\ctrl{0}		&\gate{R_y}	&\targ		&\gate{R_y}	&\gate{}				 &\qw		
}
\]
\caption{\label{fig4.1}The circuit of Figure \ref{fig4} after merging the adjacent CNOT gates into $\mathbb{D}_2$, which produces a new $\mathbb{D}_2$ gate. Two CNOT gates, next to multiplexed $R_y$ gates, can be canceled out by one CNOT in the decomposition of each multiplexed $R_y$ in accordance to \cite{Shende06}.}
\end{figure}

\subsection{Application: Synthesis of Multiple-Control Hermitian Gates}
The Jacobi method for calculating the eigenvalues and eigenvectors of Hermitian matrices was implemented in  MATLAB.
In this section, the results of applying the JBHS approach on a special set of circuits,  C$\mathbb{H}(2)$ gates and their general case with $k$ control qubits, are considered.

It can be readily verified that an $\mathbb{H}(2)$ quantum gate (except $I$ and $-I$) can be written as $\mathbb{H(\theta,\alpha)}$:
\begin{equation}
\label{eq:h2}
\mathbb{H(\theta,\alpha)}=
\left[ \begin{array}{l}
 \cos (\theta )\,\,\,\,\,e^{ - i\alpha } \sin (\theta ) \\
 e^{i\alpha } \sin (\theta )\,\,\,-\cos (\theta ) \\
 \end{array} \right],
\end{equation}
where $\alpha$ is a real parameter and $0\leq\theta\leq\pi$. Applying the JBHS method on C$\mathbb{H(\theta,\alpha)}$ will lead to the following decomposition:
\begin{equation}
\label{eq:jac}
\mathrm{C}\mathbb{H(\theta,\alpha)}=\mathrm{CR(\alpha)}\mathrm{CR_y(-\theta)}\mathrm{CZ}\mathrm{CR_y(\theta)}\mathrm{CR(-\alpha)}.
\end{equation}
The controls of the synthesized gates of the off-diagonal parts can be eliminated as they are their Hermitian conjugates. Using this optimization,~(\ref{eq:jac}) can be written as:
\begin{equation}
\label{eq:jac1}
\mathrm{C}\mathbb{H(\theta,\alpha)}=(I \otimes A)\mathrm{CZ} (I \otimes B),
\end{equation}
where $A=R(\alpha)R_y(-\theta)$ and $B=R_y(\theta)R(-\alpha)$.

The C$\mathbb{H(\theta,\alpha)}$ decomposition using the JBHS method is shown in Figure~\ref{fig7}.
\begin{figure}[!tb]
\
\small
\[
\Qcircuit @C=0.8em @R=0.0000001em @!R{
&&\qw	&\ctrl{2} 					&\qw  \gategroup{3}{16}{3}{17}{.7em}{--}		& &\qw	&\ctrl{2}				&\ctrl{2}			&\ctrl{2}	&\ctrl{2}			 &\ctrl{2}						 &\qw	&	&\qw		&\qw				 &\qw				&\ctrl{2}	&\qw				&\qw					&\qw		 \\
&&	&         					& 							&=&	&			        	&				&   		&		    		&			      	        		 &	&=	&		 &				 &				&   		&		    		&			      	 	&		\\
&&\qw	&\gate{\mathbb{H(\theta,\alpha)}}    		&\qw  \gategroup{3}{19}{3}{20}{.7em}{--}		& &\qw	&\gate{R(-\alpha)}			&\gate{R_y(\theta)}		 &\ctrl{0}	&\gate{R_y(-\theta)}		&\gate{R(\alpha)}					&\qw	&	&\qw		&\gate{R(-\alpha)}		&\gate{R_y(\theta)}		&\ctrl{0}	 &\gate{R_y(-\theta)}	&\gate{R(\alpha)}				&\qw		\\
&&	&										&							& &	&					&				&		&				&							 &	&	&		&				&				&		&			&						&																																				 
}
\]
\caption{\label{fig7}The synthesized circuit of C$\mathbb{H(\theta,\alpha)}$ using JBHS method.}
\end{figure}
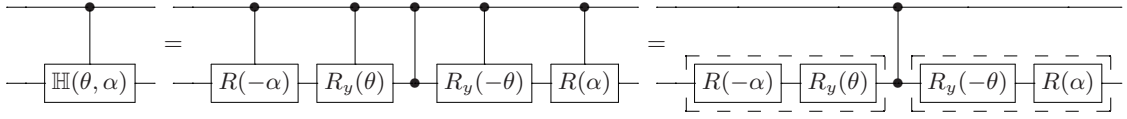

{Using the approach presented in~\cite[Lemma 5.5]{Barenco95}, the following decomposition is obtained for C$\mathbb{H(\theta,\alpha)}$ gates,}
\begin{equation}
\mathrm{C}\mathbb{H(\theta,\alpha)}=(I \otimes P) \mathrm{CNOT} (I \otimes Q),
\end{equation}
where $P=R_z(\alpha)R_y(-\theta+\frac{\pi }{2})$ and $Q=R_y(\theta-\frac{\pi }{2})R_z(-\alpha)$.
The C$\mathbb{H(\theta,\alpha)}$ decomposition using~\cite[Lemma 5.5]{Barenco95} is shown in Figure~\ref{fig8}.
\begin{figure}[!tb]
\
\small
\[
\Qcircuit @C=0.8em @R=0.0000001em @!R{
&	&\qw	&\ctrl{2} 				\gategroup{3}{10}{3}{11}{.7em}{--}	&\qw	&		& &&\qw	&\qw					&\qw					&\ctrl{2}	&\qw					 &\qw							&\qw		\\
&	&	&         									&	&		&=&&	&			        	&					&   		&		    			&			      	        		 &	      	\\
&	&\qw	&\gate{\mathbb{H(\theta,\alpha)}}    	\gategroup{3}{13}{3}{14}{.7em}{--}	&\qw	&		& &&\qw	&\gate{R_z(-\alpha)}			 &\gate{R_y(\theta-\frac{\pi}{2})}	&\targ{0}	&\gate{R_y(-\theta+\frac{\pi}{2})}	&\gate{R_z(\alpha)}					&\qw		\\
&	&	&										&	&		& &&	&					&					&		&					&							 &																																				
}
\]
\caption{\label{fig8}The synthesized circuit of C$\mathbb{H(\theta,\alpha)}$ using the method of~\cite[Lemma 5.5]{Barenco95}.}
\end{figure}

Decomposition of these gates using the QSD method~\cite{Shende06} is also calculated. In QSD, each C$\mathbb{H(\theta,\alpha)}$ gate is considered as a single-select qubit quantum multiplexer and is synthesized as follows:
\begin{equation}
\label{eq:qsd}
\mathrm{C}\mathbb{H(\theta,\alpha)}=(I \otimes V)(D\oplus D^\dagger) (I \otimes W),
\end{equation}
where $V=R(\alpha)R_y(-\theta)$, $W=SR_y(\theta)R(-\alpha)\mathbb{H}(\theta,\alpha)$ and $D=S$. The middle diagonal gate in~(\ref{eq:qsd}) is indeed a multiplexed $R_z$ gate whose target qubit is the first one.
Applying the QSD to synthesize $\mathrm{C\mathbb{H}(\theta,\alpha)}$ gates produces a circuit structure as shown in Figure~\ref{fig9}.

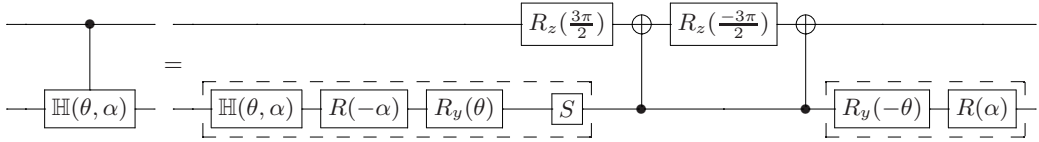
\begin{figure}[!tb]
\
\small
\[
\Qcircuit @C=0.8em @R=0.0000001em @!R{
& &\qw	\gategroup{3}{8}{3}{11}{.7em}{--}	&\ctrl{2} 					&\qw	&	&\qw	&\qw					&\qw			&\qw			 &\gate{R_z(\frac{3\pi}{2})}	&\targ{0}	&\gate{R_z(\frac{-3\pi}{2})}	&\targ{0}	&\qw			&\qw			&\qw	\\
& &						&         					&	&=	&	&					&			&			&				&   		&				&   		 &			&			&	\\
& &\qw	\gategroup{3}{15}{3}{16}{.7em}{--}	&\gate{\mathbb{H(\theta,\alpha)}}    		&\qw	&	&\qw	&\gate{\mathbb{H(\theta,\alpha)}} 	 &\gate{R(-\alpha)}	&\gate{R_y(\theta)}	&\gate{S}			&\ctrl{-2}	&\qw				&\ctrl{-2}	&\gate{R_y(-\theta)}	&\gate{R(\alpha)}	&\qw	\\
& &						&						&	&	&	&					&			&			&				&		&				&		&			 &			&																																			
}
\]
\caption{\label{fig9}The synthesized circuit of C$\mathbb{H(\theta,\alpha)}$ using the method of~\cite{Shende06}, QSD.}
\end{figure}

Table \ref{tab:res} shows the obtained decompositions to synthesize $\mathrm{C\mathbb{H}(\theta,\alpha)}$ gates using the proposed JBHS,~\cite[Lemma 5.5]{Barenco95} and the QSD methods.
Each CZ gate can be implemented using a CNOT gate at the cost of inserting two single-qubit rotation gates around $y$ axis as shown in Figure~\ref{fig:CZCNOT}.
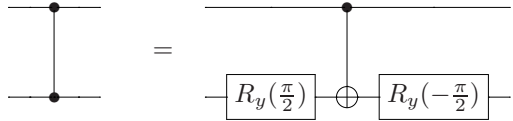
\begin{figure}[!tb]
\
 \[
\Qcircuit @C=0.8em @R=0.0000001em @!R{
&&&&\qw	&\ctrl{2}	&\qw   	&  \qw							&&&&&		&\qw				 &\ctrl{2}	&\qw		 			&\qw		\\
&	&&&	&       	&  &	&			 			&&=	&&	&&&		&        			 &          	 &         				&		 Â\\
&	&&&\qw	&\ctrl{0} 	&\qw &\qw 	&  		&&	& &	&\gate{R_y(\frac{\pi}{2})}	&\targ	&\gate{R_y(-\frac{\pi}{2})}	&\qw		
}
\]
\caption{\label{fig:CZCNOT}Circuit equivalence of CZ and CNOT gates.}
\end{figure}

Table \ref{tab:res2} compares the number of produced gates.
Although the proposed JBHS method and the method of~\cite{Barenco95} produce the same number of elementary gates, the JBHS approach directly synthesizes $\mathrm{C\mathbb{H}(\theta,\alpha)}$ gates to a library that consists of CZ and single-qubit rotation gates around $y$ and $z$ axes. The CZ gate is of interest as it is supported as a primitive operation by four quantum physical machine descriptions (PMD) while CNOT gate is supported by only two PMDs ~\cite{ftqls2014}.
CZ gates are also useful in producing a parallel structure for quantum circuits~\cite{extended} using one-way quantum computation model~\cite{par}, as the input quantum circuits to that procedure are assumed to contain CZ gates.


 \renewcommand{\arraystretch}{1.3}

\begin{table}[t]
\tbl{Synthesis comparison of $\mathrm{C\mathbb{H}(\theta,\alpha)}$ gates. $\mathrm{CNOT}^{2,1}$ denotes a CNOT gate with the control on the second and target on the first qubit.\label{tab:res}}{

\small
\begin{tabular}{|l|l|l|}
\hline	
Gate 			&Method 				&Synthesized circuit																 \\
\hline
\hline
\multirow{3}{*}{CH}	&JBHS					&$(I\otimes R_y(\frac{-\pi}{4}))$CZ$(I\otimes R_y(\frac{\pi}{4}))$										\\* \cline{2-3}

			&\cite[Lemma 5.5]{Barenco95}		&$(I\otimes R_y(\frac{\pi}{4}))$CNOT$(I\otimes R_y(\frac{-\pi}{4}))$										 \\* \cline{2-3}

			&QSD \cite{Shende06}			&$(I\otimes R_y(\frac{-\pi}{4}))$$\mathrm{CNOT}^{2,1}$$(R_z(\frac{-3\pi}{2})\otimes I)$$\mathrm{CNOT}^{2,1}$$(R_z(\frac{3\pi}{2})\otimes(SR_y(\frac{\pi}{4})H))$		\\* \cline{2-3}
\hline
\hline
\multirow{3}{*}{CY} 	&JBHS					&$(I\otimes SR_y(\frac{-\pi}{2}))$CZ$(I\otimes R_y(\frac{\pi}{2})S^{\dag})$									\\* \cline{2-3}

			&\cite[Lemma 5.5]{Barenco95}		&$(I\otimes R_z(\frac{\pi}{2}))$CNOT$(I\otimes R_z(\frac{-\pi}{2}))$										 \\* \cline{2-3}
			
			&QSD \cite{Shende06}			&$(I\otimes SR_y(\frac{-\pi}{2}))$$\mathrm{CNOT}^{2,1}$$(R_z(\frac{-3\pi}{2})\otimes I)$$\mathrm{CNOT}^{2,1}$$(R_z(\frac{3\pi}{2})\otimes(SR_y(\frac{\pi}{2})S^{\dag}Y))$	\\* \cline{2-3}
\hline
\hline
\multirow{3}{*}{CNOT} 	&JBHS					&$(I\otimes R_y(\frac{-\pi}{2}))$CZ$(I\otimes R_y(\frac{\pi}{2}))$										 \\* \cline{2-3}

			&\cite[Lemma 5.5]{Barenco95}		&CNOT																		 \\* \cline{2-3}

			&QSD \cite{Shende06}			&$(I\otimes R_y(\frac{\pi}{2}))$$\mathrm{CNOT}^{2,1}$$(R_z(\frac{-3\pi}{2})\otimes I)$$\mathrm{CNOT}^{2,1}$$(R_z(\frac{3\pi}{2})\otimes(SR_y(\frac{\pi}{2})X))$		\\* \cline{2-3}
\hline
\hline
\multirow{3}{*}{CZ} 	&JBHS					&CZ																		 \\* \cline{2-3}

			&\cite[Lemma 5.5]{Barenco95}		&$(I\otimes R_y(\frac{\pi}{2}))$CNOT$(I\otimes R_y(\frac{-\pi}{2}))$										 \\* \cline{2-3}
			
			&QSD \cite{Shende06}			&$\mathrm{CNOT}^{2,1}$$(R_z(\frac{-3\pi}{2})\otimes I)$$\mathrm{CNOT}^{2,1}$$(R_z(\frac{3\pi}{2})\otimes (SZ))$								 \\* \cline{2-3}

\hline
\end{tabular}}
\end{table}

%

\renewcommand{\arraystretch}{1.3}

\begin{table}[t]
\tbl{Comparison of the number of produced CNOT and single-qubit rotation gates around $y$ and $z$ axis and CZ and single-qubit rotation gates around $y$ and $z$ axis after the synthesis of $\mathrm{C\mathbb{H}(\theta,\alpha)}$ gates. The JBHS method directly produces CZ gates and the method of \cite{Barenco95} and \cite{Shende06} directly produce CNOT gates.\label{tab:res2}}{

\small
\begin{tabular}{|l|l|c|c||c|c|}
\hline	
Gate 			&Method 				&\#(CNOT)&\#($R_y$,$R_z$)	&\#(CZ)	&\#($R_y$,$R_z$)	 \\
\hline
\hline
\multirow{3}{*}{CH}	&JBHS					&1	&2		&1	&2			\\* \cline{2-6}

			&\cite[Lemma 5.5]{Barenco95}		&1	&2		&1	&2			 \\* \cline{2-6}

			&QSD \cite{Shende06}			&2	&6		&2	&10			\\* \cline{2-6}
\hline
\hline
\multirow{3}{*}{CY} 	&JBHS					&1	 &2		&1	 &4			\\* \cline{2-6}

			&\cite[Lemma 5.5]{Barenco95}		&1	&2		&1	&4			 \\* \cline{2-6}
			
			&QSD \cite{Shende06}			&2	&7		&2	&11			\\* \cline{2-6}
\hline
\hline	
\multirow{3}{*}{CNOT} 	&JBHS					&1	&0		&1	&2			 \\* \cline{2-6}

			&\cite[Lemma 5.5]{Barenco95}		&1	&0		&1	&2			\\* \cline{2-6}

			&QSD \cite{Shende06}			&2	&6		&2	&10			\\* \cline{2-6}
\hline
\hline
\multirow{3}{*}{CZ} 	&JBHS					&1	&2		&1	&0			\\* \cline{2-6}

			&\cite[Lemma 5.5]{Barenco95}		&1	&2		&1	&0			 \\* \cline{2-6}
			
			&QSD \cite{Shende06}			 &2	&4		 &2	&8			\\* \cline{2-6}

\hline
\end{tabular}}
\end{table}

%

Proposition 3 shows how applying the JBHS method on multiple-control $\mathbb{H}(2)$ gates can lead to an implementation which requires a linear number of elementary gates in terms of circuit lines.

\textbf{Proposition 3.} Using one auxiliary qubit with an arbitrary state, any multiple-control $\mathbb{H}(2)$ gate on $n$-qubits can be decomposed to $O(n)$ elementary gates, using the proposed JBHS method.

\begin{proof} If $\mathbb{H}(2)$ is $I$ or $-I$ gate, then the JBHS method will produce $I^{\otimes {n}}$ or $-I^{\otimes {n}}$ gates which require no elementary gates to be implemented. Otherwise, each specific $\mathbb{H}(2)$ quantum gate can be written as $\mathbb{H}(\theta,\alpha)$ using~(\ref{eq:h2}). Applying the JBHS method to multiple-control $\mathbb{H}(2)$ gates will lead to a circuit structure similar to Figure~\ref{fig:multiple}.
The middle multiple-control $Z$ gate can be decomposed to a multiple-control NOT gate at the cost of inserting two rotation gates around $y$ axis (Figure~\ref{fig:CZCNOT}). This can in turn be decomposed to $O(n)$ elementary gates using one auxiliary qubit with an arbitrary state by the approach presented in~\cite{maslov2008}.
\end{proof}

\renewcommand{\arraystretch}{1.3}

\begin{table}[t]
\tbl{Comparison of the number of produced CZ and single-qubit (1-qu) gates for decomposing $C^{n-2}U$ gates where $U$ is a single-qubit Hermitian gate.\label{tab:res3}}{

\scriptsize
\begin{tabular}{|l|c|c|c|c|c|c|c|c|}
\hline	
\multirow{3}{*}{Method} 	&\multicolumn{8}{c|}{Number of qubits}				 \\
				&\multicolumn{2}{c|}{7}	&\multicolumn{2}{c|}{8}		&\multicolumn{2}{c|}{9}		&\multicolumn{2}{c|}{$n$}	\\* \cline{2-9}
				&\#CZ	 &\#1-qu	&\#CZ	 	&\#1-qu	&\#CZ	 	&\#1-qu	&\#CZ	 	&\#1-qu	\\
\hline				
JBHS				&\textbf{84}	&\textbf{98}		&\textbf{108}		&\textbf{122}		&\textbf{168}		&\textbf{146}		&\textbf{24$n$-48}		 &\textbf{24$n$-70}		\\
\hline
\cite[Collary 7.12]{Barenco95}	&122		&124			&170			&172			&218			&220			&48$n$-214			&48$n$-212		 \\
\hline
\end{tabular}}
\end{table}

\begin{figure}[!tb]
\centering
\scriptsize
\[
\Qcircuit @C=0.8em @R=0.0000001em @!R{
&n-1	&&&/\qw	&\ctrl{2} 					&\qw  \gategroup{3}{26}{3}{27}{.7em}{--}		& &&n-1	&&&/\qw	&\ctrl{2}				&\ctrl{2}			 &\ctrl{2}	&\ctrl{2}			 &\ctrl{2}						 &\qw	&	&&n-1	&&&/\qw		&\qw				 &\qw				 &\ctrl{2}	&\qw				 &\qw					&\qw		\\
&	&&&	&         					& 							&=&&	&&&	&			        	 &				&   		&		    		&			      	        		 &	&=	&&	 &&&		 &				 &				&   		&		    		&			      	 	&		 \\
&	&&&\qw	&\gate{\mathbb{H(\theta,\alpha)}}    		&\qw  \gategroup{3}{29}{3}{30}{.7em}{--}		 & &&	&&&\qw	&\gate{R(-\alpha)}			 &\gate{R_y(\theta)}		&\ctrl{0}	 &\gate{R_y(-\theta)}		&\gate{R(\alpha)}					&\qw	&	&&	&&&\qw		 &\gate{R(-\alpha)}		 &\gate{R_y(\theta)}		&\ctrl{0}	&\gate{R_y(-\theta)}	 &\gate{R(\alpha)}				&\qw		\\
&	&&&	&						&							& &&	&&&	&					&				 &		&				&							 &	&	&&	&&&		&				&				 &		&			&						&																														 
}
\]
\caption{\label{fig:multiple}The synthesized circuit of multiple-control $\mathbb{H}(\theta,\alpha)$ gate on $n$ qubits.}
\end{figure}
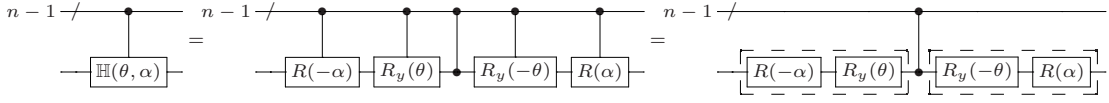

It should be noted that an arbitrary multiple-control $U$ gate can also be implemented using linear number of elementary gates, using one auxiliary qubit by~\cite[Collary 7.12]{Barenco95}. However, the auxiliary qubit should be initially fixed in the state of $|0\rangle$. Auxiliary qubits with an arbitrary state, in contrast to qubits with fixed states, can be employed in the rest of the circuit for other computations. Besides, Table \ref{tab:res3} is provided to compare the gate counts produced by the proposed JBHS approach and the approach of~\cite[Collary 7.12]{Barenco95}. The synthesized circuits resulted from the two approaches, one with an auxiliary qubit in arbitrary state and the other with the auxiliary qubit fixed to $|0\rangle$, are considered. To do this, the results of~\cite[Collary 7.4]{Barenco95} and \cite{Shende09} are used to decompose the produced multiple-control NOT gates to
CZ and single-qubit gates.
As shown in the table, the number of both CZ and single-qubit gates improves the results of ~\cite[Collary 7.12]{Barenco95}.

As some examples, decomposition results of controlled-$Y$ and controlled-Hadamard gates are shown in (\ref{eq-cn}) and (\ref{eq-ch}), respectively.
The synthesized circuits of multiple-control $Y$ and multiple-control Hadamard gates on $n$ qubits are illustrated in Figures \ref{fig5} and \ref{fig6}, respectively.

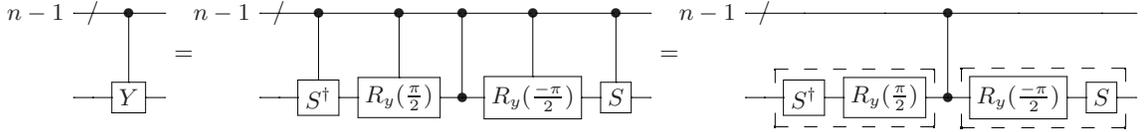
\begin{figure}[!tb]
\centering
\small
\[
\Qcircuit @C=0.8em @R=0.0000001em @!R{
&n-1	&&&/\qw	&\ctrl{2} 		&\qw  	& &&n-1	&&&/\qw	&\ctrl{2}		 \gategroup{3}{26}{3}{27}{.7em}{--}		&\ctrl{2}			&\ctrl{2}	&\ctrl{2}			 &\ctrl{2}					 &\qw	&	&&n-1	&&&/\qw		&\qw				 &\qw				&\ctrl{2}	&\qw				&\qw					&\qw		 \\
&	&&&	&         		& 	&=&&	&&&	&							        	&				&   		&		    		&			      	        	 &	&=	&&	&&&		 &				 &				&   		&		    		&			      	 	&		\\
&	&&&\qw	&\gate{Y}    		&\qw  	& &&	&&&\qw	&\gate{S^{\dag}}	 \gategroup{3}{29}{3}{30}{.7em}{--}		&\gate{R_y(\frac{{\pi }}{2})}	&\ctrl{0}	 &\gate{R_y(\frac{{-\pi }}{2})}	&\gate{S}					&\qw	&	&&	&&&\qw		&\gate{S^{\dag}}		&\gate{R_y(\frac{{\pi }}{2})}	&\ctrl{0}	 &\gate{R_y(\frac{{-\pi }}{2})}	&\gate{S}				&\qw		
}															
\]
\caption{\label{fig5}The synthesized circuit of multiple-control $Y$ gate on $n$ qubits.}
\end{figure}

\begin{figure}[!tb]
\centering
\small
\[
\Qcircuit @C=0.8em @R=0.0000001em @!R{
&n-1	&&&/\qw	&\ctrl{2}	&\qw   	&  	&&n-1	&&&/\qw	&\ctrl{2}				&\ctrl{2}		&\ctrl{2}					&\qw	&&	&&n-1	&&&/\qw		&\qw				 &\ctrl{2}	&\qw		 			&\qw		\\
&	&&&	&       	&  	&=  	&&	&&&	&			 		&   		    	&						&	&&=	&&	&&&		&        			&          	&         				 &		 \\
&	&&&\qw	&\gate{H} 	&\qw  	&  	&&	&&&\qw	&\gate{R_y (\frac{{\pi }}{4})}		&\ctrl{0}		&\gate{R_y (\frac{{-\pi }}{4})}			&\qw	&&	&&	 &&&\qw		&\gate{R_y (\frac{{\pi }}{4})}	&\ctrl{0}	&\gate{R_y (\frac{{-\pi }}{4})}		&\qw		
}
\]
\caption{\label{fig6}The synthesized circuit of multiple-control Hadamard gate on $n$ qubits.}
\end{figure}
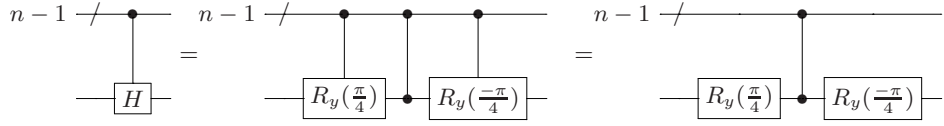

\begin{equation}\label{eq-cn}
\small
\begin{array}{l}
 \left[ {\begin{array}{*{20}c}
   1 & 0 & 0 & 0  \\
   0 & 1 & 0 & 0  \\
   0 & 0 & 0 & -i  \\
   0 & 0 & i & 0  \\
\end{array}} \right] =
 \left[ {\begin{array}{*{20}c}
      1 & 0 & 0 & 0  \\
   0 & 1 & 0 & 0  \\
   0 & 0 & 1 & 0  \\
   0 & 0 & 0 & i  \\
\end{array}} \right]\left[ {\begin{array}{*{20}c}
1 & 0 & 0 & 0  \\
   0 & 1 & 0 & 0  \\
   0 & 0 & {0.7071} & {-0.7071}  \\
   0 & 0 & {0.7071} & {0.7071}  \\
\end{array}} \right] \left[ {\begin{array}{*{20}c}
   1 & 0 & 0 & 0  \\
   0 & 1 & 0 & 0  \\
   0 & 0 & 1 & 0  \\
   0 & 0 & 0 & { - 1}  \\
\end{array}} \right]
\left[ {\begin{array}{*{20}c}
    1 & 0 & 0 & 0  \\
   0 & 1 & 0 & 0  \\
   0 & 0 & {0.7071} & {0.7071}  \\
   0 & 0 & {-0.7071} & {0.7071}  \\
\end{array}} \right] \left[ {\begin{array}{*{20}c}
 1 & 0 & 0 & 0  \\
   0 & 1 & 0 & 0  \\
   0 & 0 & 1 & 0  \\
   0 & 0 & 0 & -i  \\
\end{array}} \right] \\
 \end{array}
\end{equation}
\begin{equation}\label{eq-ch}
\small
\begin{array}{l}
 \left[ {\begin{array}{*{20}c}
   1 & 0 & 0 & 0  \\
   0 & 1 & 0 & 0  \\
   0 & 0 & {2^{ - 0.5} } & {2^{ - 0.5} }  \\
   0 & 0 & {2^{ - 0.5} } & { - 2^{ - 0.5} }  \\
\end{array}} \right] =
 \left[ {\begin{array}{*{20}c}
   1 & 0 & 0 & 0  \\
   0 & 1 & 0 & 0  \\
   0 & 0 & { 0.9239} & {-0.3827}  \\
   0 & 0 & { 0.3827} & { 0.9239}  \\
\end{array}} \right]\left[ {\begin{array}{*{20}c}
   1 & 0 & 0 & 0  \\
   0 & 1 & 0 & 0  \\
   0 & 0 & 1 & 0  \\
   0 & 0 & 0 & { - 1}  \\
\end{array}} \right]\left[ {\begin{array}{*{20}c}
   1 & 0 & 0 & 0  \\
   0 & 1 & 0 & 0  \\
   0 & 0 & { 0.9239} & { 0.3827}  \\
   0 & 0 & {-0.3827} & { 0.9239}  \\
\end{array}} \right] \\
 \end{array}
\end{equation}



\section{Conclusions and Future Works}\label{sec:Conc}
The problem of quantum-logic synthesis of Hermitian quantum gates was addressed in this paper. The Jacobi-based synthesis approach, JBHS was introduced that uses the Jacobi method to decompose a given matrix to a set of two-level matrices and a middle diagonal Hermitian matrix. The quantum-gate equivalence of this matrix decomposition was discussed and the structure of the circuit and its possible optimizations were described.

Finally, the results of applying the JBHS method on multiple-control $\mathbb{H}(2)$ gates were presented to demonstrate how the proposed method can synthesize these gates using a linear number of elementary gates in terms of circuit lines, with the aid of one auxiliary qubit in an arbitrary state.

Some further improvements can be applied to the proposed approach. As a future work, finding the best order for zeroing non-diagonal elements during the JBHS method in order to reduce the number of produced elementary gates is being considered.





\end{document}